\newtheorem{defi}{Definition}
\newtheorem{teo}{Theorem}
\newtheorem{propi}[teo]{Proposition}
\newtheorem{rem}{Remark}
\newcommand{\B}{\boldsymbol}
\newcommand{\C}{\mathcal{C}}
\title{\bf Expert Classification Aggregation}
\author[1,2]{Federico Fioravanti\footnote{f.fioravanti@uva.nl\\ I am grateful to Fernando Tohm\'e, Ulle Endriss, Bernardo Moreno, Agustín Bonifacio, participants of the COMSOC seminar at the ILLC, and two anonymous reviewers for comments and suggestions that led to an improvement of the paper.}}
\affil[1]{GATE, Saint-Etienne School of Economics, Jean Monnet University, Saint-Etienne, France}
\affil[2]{Institute for Logic, Language and Computation, University of Amsterdam, Amsterdam, The Netherlands}
\date{\vspace{-5ex}}
\begin{document}

\maketitle
\begin{abstract}
We consider the problem where a set of individuals has to classify $m$ objects into $p$ categories by aggregating the individual classifications, and no category can be left empty.
An aggregator satisfies \emph{Expertise} if individuals are decisive either over the classification of a given object, or the classification into a given category. 
We show that requiring an aggregator to satisfy \emph{Expertise} (or variants of it) and be either unanimous or independent leads to numerous impossibility results.

\noindent {\it Keywords}: Classification Aggregation; Expertise.

\noindent {\it JEL Classification}: D71.

\end{abstract}
\section{Introduction}
The problem where a set of $n$ individuals has to classify a set of $m$ objects into $p$ different categories, and no categories should be left empty, can be seen as a relevant one in many scenarios.
For example, consider a set of managers that has to assign a set of workers to different tasks.
One of the natural considerations at hand could be that no task should be left unassigned.
Another example is training neural networks to classify images of cats, dogs, and rabbits, using a large database with an equal proportion of images of each animal. 
Requiring each neural network to classify at least one image of each animal serves as a sanity check for the training process.

\citet{maniquet2016theorem} propose a formal setting to consider this problem, extending \citeauthor{kasher1997question}'s \citeyearpar{kasher1997question} analysis of the Group Identification Problem, the situation where a set of individuals has to classify a subset of them into two categories.\footnote{A similar scenario occurs when a group of experts must select one among themselves to receive an award.
See, for example, \citet{holzman2013impartialnominations,tamura2014impartial}, and \citet{TAMURA201641}.}
Inspired by \citeauthor{Arrow1951-ARRIVA}'s \citeyearpar{Arrow1951-ARRIVA} axioms, Pareto and Independence of Irrelevant Alternatives, they show that an aggregator that satisfies Unanimity, which indicates that unanimous classifications should be respected, and Independence, which requires considering each object separately, must be dictatorial.
Later on, \citet{cailloux2023classification} weaken the Unanimity axiom, to obtain a weakening of the impossibility result that holds for $m>p\geq 2$, with the existence of an essential dictator.\footnote{If for every classification problem, objects are always classified according to a permutation of the classification of a given individual, it is essentially a dictatorship.}

The intention of this paper is to follow the line of \citet{maniquet2016theorem}  and \citet{cailloux2023classification}, who study counterparts for the classification aggregation problem of classical results in standard preference aggregation (with \citeauthor{Arrow1951-ARRIVA}'s \citeyearpar{Arrow1951-ARRIVA} and \citeauthor{WILSON1972478}'s \citeyearpar{WILSON1972478} theorems, respectively).
In this work, we draw inspiration from \citeauthor{sen1970paretianliberal}'s \citeyearpar{sen1970paretianliberal} impossibility result of a Paretian liberal and adapt various versions of the liberal axiom to our context. 
\citeauthor{sen1970paretianliberal}'s \citeyearpar{sen1970paretianliberal} shows that even a minimal level of individual rights (where people have control over personal choices) can conflict with Pareto efficiency (the idea that if everyone agrees something is better, we should choose that outcome).
In our setting, objects do not necessarily belong to individuals' private spheres; hence, instead of referring to it as the liberal axiom, we term it the expertise axiom.
This property requires that a collective decision process should ensure that each individual with expertise can unilaterally determine the classification of a specific object or the objects assigned into a particular category. 
The results obtained here are significant in cases where individuals have a direct connection to some of the objects they need to classify or the categories involved in the classification process.
Consider, for instance, a group of editors responsible for overseeing the publication of a `Handbook in Economics', where each major area of the discipline must be represented by a dedicated chapter. 
Some editors may submit papers for specific chapters themselves, while others may have specialised expertise in particular fields of economics. 
In this context, two reasonable approaches could be proposed: allowing editors who are also authors to decide the appropriate chapter for their own submissions or assigning the task to editors with expertise in the corresponding chapters.
These ideas can be captured by an expertise axiom, indicating different ways in which an individual can be decisive over the classification of an object or into a category.
We show that in general, the expertise axiom is incompatible with natural properties such as Unanimity or Independence.
Furthermore, some versions of expertise alone are not even satisfiable in this setting, where no category can be left empty.
This represents a departure from results in the Group Identification Problem, where there are aggregators where the individuals are decisive for certain classifications \citep{miller2008group,fioravanti2021alternative}.
The key distinction between these two settings is that in the Group Identification Problem, it is not necessary that all the categories get an object classified into.

The implications of the impossibility results presented in this paper extend to several domains where classification tasks are fundamental. 
For instance, in organisational decision-making, where tasks or resources must be allocated to individuals or groups, the expertise axiom reflects the need to respect individual preferences based on their expertise. 
The results here imply that in striving to ensure that no single individual's preferences dominate across all tasks, we may encounter situations where it is impossible to maintain other desirable properties such as Unanimity or Independence. 
This finding could be particularly relevant in settings where collaborative decision-making processes are used, such as in project management or committee-based allocations.
Similarly, in the field of multi-agent systems, where multiple autonomous agents must classify resources or tasks into categories such as `high priority', `medium priority', and `low priority', the expertise axiom might reflect the need for each agent to have control over at least one task or resource. 
However, the impossibility results indicate that achieving a fair and efficient aggregation of these classifications might be inherently challenging when trying to balance `control', Unanimity, and Independence. 
This has practical implications for the design of decentralised decision-making algorithms, used, for example, in distributed computing or collaborative robotics.

The plan of the paper is as follows.
Section \ref{basic} presents the basic notions and axioms that we use, while we present the results in Section \ref{results}.
Finally, Section \ref{remarks} contains some concluding remarks.

\section{Basic Notions and Axioms}\label{basic}
Let $N=\{1,\ldots,n\}$, with $n\geq 2$, be a finite set of individuals and let $X=\{x_1,\ldots,x_m\}$ be a set of $m$ objects that need to be classified into the $p$ categories of a set $P$, with $p\geq 2$.
In this setting, introduced by \citet{maniquet2016theorem}, \emph{classifications} are surjective mappings $c:X\rightarrow P$, that is, every category must have at least one object classified into.
Thus we have that $m\geq p\geq 2$.
We use $\C$ to denote the set of classifications, and every \mbox{$\B{c}=(c_1,\ldots,c_n)\in \C^N$} is a {\it classification aggregation problem}, with $c_i$ indicating the classification given by individual $i$. 
A {\it classification aggregation function} (CAF) is a mapping \mbox{$\alpha:\C^N\rightarrow \C$} such that $\alpha(\B{c})(x)$ indicates the category where object $x$ is classified into.
We call the outcome of $\alpha$ the {\it social classification}.
Given a category $t\in P$, we denote the inverse image of $t$ as $\alpha(\B{c})^{-1}(t)$. 
Formally, ${\alpha(\B{c})}^{-1}(t)=\{x\in X\mid \alpha(\B{c})(x)=t\}$.

In the following, we introduce a number of axioms, i.e., fundamental normative requirements that, in this specific classification process, we consider a reasonable CAF should satisfy.
The first property states that if there is an object that is unanimously classified by the individuals, then the CAF has to classify that object accordingly.
This property is the unary interpretation of \citeauthor{Arrow1951-ARRIVA}'s \citeyearpar{Arrow1951-ARRIVA} and \citeauthor{sen1970paretianliberal}'s \citeyearpar{sen1970paretianliberal} Pareto principle, introduced by \citet{maniquet2016theorem}.
\begin{defi}[Unanimity]
A CAF is {\em unanimous} if for all $\B{c}\in\C^N$, all $x\in X$, and all $t\in P$ such that $c_1(x)=\cdots=c_n(x)=t$, it is the case that $\alpha(\B{c})(x)=t$.    
\end{defi}
Next, we introduce different interpretations of how an individual can be decisive in a classification problem.
Individuals can have decision power over objects, categories, or pairs of objects and categories.
We say that individual $i$ is {\em decisive over object} $x$, if for all $\B{c}\in\C^N$, and all $t\in P$, it is the case that $c_i(x)=t$ implies $\alpha(\B{c})(x)=t$.\footnote{Although this definition is expressed as an implication for presentation purposes, it is equivalent to using a bi-conditional, given that an individual must classify every object.}
An individual $i$ is {\em categorically decisive over category} $t$, if for all $\B{c}\in\C^N$, and all $x\in X$, it is the case that $\alpha(\B{c})(x)=t$ implies  $c_i(x)=t$.
Finally, individual $i$ is {\em minimally decisive over the object $x$ and category $t$}, if for all $\B{c}\in\C^N$, it is the case that $c_i(x)=t$ if, and only if, $\alpha(\B{c})(x)=t$.\footnote{These formal definitions provide different interpretations of how an individual can be decisive in this setting, which are useful for identifying impossibility results. 
While other definitions, such as those presented in Remark 1, are of interest, they fall outside the scope of this work, which focuses on illustrating the complexity of aggregation processes with decisive individuals.}
Now we present three axioms requiring the existence of two decisive individuals, an adaptation to this setting of \citeauthor{sen1970paretianliberal}'s \citeyearpar{sen1970paretianliberal} minimally liberal principle.
The goal is to ensure a minimum number of decisive individuals, sufficient to prevent the emergence of a dictator while remaining practically demanding.\footnote{The objects of Definition~\ref{exp} and Definition~\ref{minexp} must be different, otherwise an object might end up being classified into two different categories (a contradiction to the definition of a classification function).}
\begin{defi}[Expertise]\label{exp}
A CAF is \emph{expert} if there exist two individuals $i,j\in N$ and two different objects $x,y\in X$ such that for all $\B{c}\in\C^N$, individuals $i$ and $j$ are decisive over $x$ and $y$, respectively.  
\end{defi}
\begin{defi}[Categorical Expertise]
A CAF is \emph{categorically expert} if there exist two individuals $i,j\in N$ and two categories $t,t'\in P$ such that for all $\B{c}\in\C^N$, individuals $i$ and $j$ are categorically decisive over $t$ and $t'$, respectively. 
\end{defi}
\begin{defi}[Minimal Expertise]\label{minexp}
A CAF is {\em minimally expert} if there exist two individuals $i,j\in N$ and two pairs $(x,t),(y,t')\in X\times P$, with $x\neq y$, such that for all $\B{c}\in\C^N$, individuals $i$ and $j$ are minimally decisive over $(x,t)$ and $(y,t')$, respectively.   
\end{defi}
It is easy to see that Minimal Expertise is weaker than Expertise, with the former being implied by the latter.
Finally, we present a unary interpretation of \citeauthor{Arrow1951-ARRIVA}'s \citeyearpar{Arrow1951-ARRIVA} Independence of Irrelevant Alternatives, introduced by \citet{maniquet2016theorem}, which states that the social classification of an object in two different profiles is the same if the individual's classifications of that object are the same in both profiles. 
\begin{defi}[Independence]
A CAF is {\em independent} if given $\B{c},\B{c'}\in\C^N$ and $x\in X$ such that \mbox{$c_i(x)=c'_i(x)$} for all $i\in N$, it is the case that \mbox{$\alpha(\boldsymbol{c})(x)=\alpha(\boldsymbol{c'})(x)$}.    
\end{defi}

\section{Results}\label{results}
We start by looking at the implications of Minimal Expertise, and its conjunction with Unanimity.
Our first result resembles \citeauthor{sen1970paretianliberal}'s \citeyearpar{sen1970paretianliberal} impossibility of a Paretian liberal, highlighting the conflict that arises between the Pareto principle and individual expertise (individuals being decisive over objects or categories in which they are experts).
\begin{teo}\label{unaminlib}
 For $m\geq p\geq 2$, there is no CAF that satisfies Unanimity and Minimal Expertise.  
\end{teo}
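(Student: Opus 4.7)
The plan is to argue by contradiction via a single carefully chosen profile. Suppose some CAF $\alpha$ satisfies both axioms, and let distinct individuals $i,j\in N$ together with pairs $(x,t),(y,t')\in X\times P$, $x\ne y$, witness Minimal Expertise. I would build one profile $\B{c}\in\C^N$ in which Unanimity and Minimal Expertise jointly force $\alpha(\B{c})$ to place no object into some target category $t^\star\in P$, violating the surjectivity required by $\alpha(\B{c})\in\C$.

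The construction splits on whether $t=t'$. If $t\ne t'$, take $t^\star:=t'$ and put $c_i(x)=t$ and $c_j(y)=t$; then Minimal Expertise gives $\alpha(\B{c})(x)=t\ne t^\star$ and $\alpha(\B{c})(y)\ne t^\star$. If $t=t'$, pick any $t_0\in P\setminus\{t\}$ (which exists because $p\ge 2$), take $t^\star:=t$, and put $c_i(x)=t_0$ and $c_j(y)=t_0$; the contrapositives of Minimal Expertise then give $\alpha(\B{c})(x),\alpha(\B{c})(y)\ne t$. To complete the profile I would use the inequality $m-2\ge p-2$ to distribute the objects of $Z:=X\setminus\{x,y\}$ unanimously over categories in $P\setminus\{t^\star\}$ so as to cover $P\setminus\{t,t^\star\}$ (respectively $P\setminus\{t,t_0\}$ in the second case). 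To keep each individual's own classification surjective, $t^\star$ still needs to be covered privately by $i$ and $j$, which I would arrange by setting $c_i(y)=t^\star$ and $c_j(x)=t^\star$; every remaining individual mimics $c_i$ on $\{x,y\}$.

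At this point Unanimity applied to each $z\in Z$ forces $\alpha(\B{c})(z)\ne t^\star$, and Minimal Expertise supplies $\alpha(\B{c})(x),\alpha(\B{c})(y)\ne t^\star$; hence $t^\star$ has empty preimage under $\alpha(\B{c})$, contradicting the surjectivity built into the codomain $\C$. The main technical hurdle is balancing the two competing requirements that every $c_k$ be surjective (so must place something in $t^\star$) yet that Unanimity be unable to restore $t^\star$ in the aggregate. This is precisely what the assumption $i\ne j$ enables, because $i$ and $j$ cover $t^\star$ privately using \emph{different} objects of $\{x,y\}$, silencing Unanimity at both $x$ and $y$ while leaving Minimal Expertise fully in charge at those two coordinates.
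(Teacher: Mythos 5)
Your proposal is correct and follows essentially the same strategy as the paper's proof: in each case ($t\neq t'$ and $t=t'$) you build a single profile in which the two decisive individuals swap $x$ and $y$ so that the biconditional of Minimal Expertise keeps both objects out of a target category, while all remaining objects are unanimously placed elsewhere, leaving that category empty and contradicting surjectivity. The only (cosmetic) differences are the relabelling of which expert's category is targeted and your explicit verification that each individual classification remains surjective, which the paper handles implicitly through its tabular construction.
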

\begin{proof}
Let $\alpha$ be a CAF that satisfies Unanimity and Minimal Expertise, and assume, without loss of generality, that the individual $1$ is decisive over the pair $(x,t')$ and the individual $2$ is decisive over the pair $(y,t'')$.
Now suppose that $t'\neq t''$, $t'=t_1$, $t''=t_2$, and consider the following classification aggregation problem $\B{c}$:\footnote{The rows in the table indicate the objects that the individuals classify into a given category, while the columns are the individual's classifications.}
\begin{center}
\begin{tabular}{c|c c c}
$\B{c}$    & $1$ & $2$ & $N\setminus\{1,2\}$  \\
    \hline
 $t_1$    & $y$ & $x$ &$x$  \\
 $t_2$    & $\{x\}\cup A_2$ & $\{y\}\cup A_2$ &$\{y\}\cup A_2$  \\
$t_3$    & $A_3$ & $A_3$ &$A_3$  \\
$\vdots$& $\vdots$ & $\vdots$ &$\vdots$  \\
$t_p$    & $A_p$ & $A_p$ &$A_p$  
\end{tabular}
\end{center}
where $A_i\subset X$, $A_i\neq \emptyset$ for $i=\{3,\ldots,p\}$, for all $i$ and $j$ we have that $A_i\cap A_j=\emptyset$, and that $\bigcup_{i=2}^pA_i=X\setminus\{x,y\}$.
So, the classification given by all the individuals is the same for every object except for $x$ and $y$.
Thus, by Unanimity, $A_i\subseteq\alpha(\B{c})^{-1}(t_i)$ for $i\in\{3,\ldots,p\}$.
In particular, by Unanimity and Minimal Expertise, we have that $\{y\}\cup A_2\subseteq\alpha(\B{c})^{-1}(t_2)$ and $\alpha(\B{c})(x)\neq t_1 $.
Thus $\alpha(\B{c})^{-1}(t_1)=\emptyset$, concluding that $\alpha(\B{c})$ is not a classification function.
The proof is similar for the case where $t'= t''$. 
If we assume $t'=t''=t_1$, then using the same classification aggregation problem $\B{c}$, we have that $x,y\notin \alpha(\B{c})^{-1}(t_1)$, and thus $\alpha(\B{c})^{-1}(t_1)=\emptyset$.
\end{proof}
\begin{rem}\label{minimallysemidecisive}\em If we consider that an individual $i$ can be {\em minimally semi-decisive for a pair $(x,t)$}, that is, if $c_i(x)=t$ implies $\alpha(\B{c})(x)=t$, then there are CAF's that are unanimous and can have two or more minimally semi-decisive individuals.\footnote{Note that in \citet{sen1970paretianliberal}, requiring individuals to be semi-decisive still leads to an impossibility result.}
One example is the following CAF.
Let $x_1\succ x_2\succ \ldots\succ x_m$ and $t_1\succ t_2\succ\ldots\succ t_p$ be given orders for the objects and the categories, respectively, and assume $1$ and $2$ are minimally semi-decisive over the pairs $(x_1,t_1)$ and $(x_2,t_2)$, respectively. 
For the cases where $c_1(x_1)=t_1$ or $c_2(x_2)=t_2$, the CAF classifies those objects accordingly. 
Then, classifies all the unanimous classifications accordingly.
And finally, it classifies the first unassigned object to the first empty category, the second unassigned object to the second empty category, and so on, following the given orders, until all the objects are classified.
For the case where $m>p$, when all the categories have one object, unassigned objects are classified into the categories following the given order. 
If neither $c_1(x_1)=t_1$ nor $c_2(x_2)=t_2$, the CAF skips the first step.
This rule also works for the case where $1$ and $2$ are minimally semi-decisive over $(x_1,t)$ and $(x_2,t)$, when $m>p$.
For $m=p$, there is no CAF for this case, as one category might remain without an object being classified into. 

\end{rem}

For the particular case of $2$ objects and $2$ categories, even the sole requirement of the existence of minimally decisive individuals is excessively demanding.
\begin{propi}\label{m=p=2}
 For $m=p= 2$, there is no CAF that satisfies Minimal Expertise.   
\end{propi}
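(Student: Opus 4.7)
The plan is to exploit the extreme scarcity of admissible classifications when $m = p = 2$: there are only two surjections $X \to P$, and each is completely determined by the image of a single object. Under Minimal Expertise the two distinct objects $x$ and $y$ controlled by the decisive individuals exhaust $X$, so $\{x,y\} = X$ and the two decisive pins jointly determine the image of every object.

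I would derive the contradiction from a single carefully chosen profile. Let $\alpha$ be a CAF satisfying Minimal Expertise, witnessed by individuals $i,j$ and pairs $(x,t),(y,t') \in X \times P$ with $x \neq y$. Pick a profile $\B{c} \in \C^N$ in which $c_i$ is the (unique) classification with $c_i(x)=t$ and $c_j$ is the (unique) classification with $c_j(y)=t$, with the remaining individuals classifying arbitrarily; since $p=2$, both $c_i$ and $c_j$ are legitimate surjections. Minimal decisiveness of $i$ over $(x,t)$ immediately yields $\alpha(\B{c})(x)=t$.

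To pin down $\alpha(\B{c})(y)$ I would split on whether $t = t'$. If $t=t'$, then $c_j(y) = t = t'$ and minimal decisiveness of $j$ over $(y,t')$ gives $\alpha(\B{c})(y)=t'=t$. If $t \neq t'$, then $c_j(y) = t \neq t'$, so the biconditional form of minimal decisiveness forces $\alpha(\B{c})(y) \neq t'$; since $P = \{t,t'\}$, this leaves $\alpha(\B{c})(y)=t$. In both cases $\alpha(\B{c})$ sends all of $X$ into the single category $t$, leaving the other category empty and contradicting the surjectivity required of a classification.

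I do not anticipate a real obstacle: the argument is purely structural and, unlike Theorem~\ref{unaminlib}, does not even invoke Unanimity. The only care needed is the case split on whether $t$ and $t'$ coincide, together with the observation that in the $m=p=2$ regime any two decisive pins placed on distinct objects must together cover all of $X$, so the output cannot simultaneously respect both pins and remain surjective.
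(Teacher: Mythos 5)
Your proof is correct and takes essentially the same approach as the paper's: a single profile in which the biconditional form of minimal decisiveness forces both objects of $X=\{x,y\}$ into one category, leaving the other empty and violating surjectivity. The only cosmetic difference is that you choose the profile so that both objects are driven \emph{into} $t$ (one positive pin, one negative pin when $t\neq t'$), whereas the paper drives both objects \emph{out of} $t_1$; both case splits go through identically.
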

\begin{proof}
Let $\alpha$ be a CAF that satisfies Minimal Expertise, and assume, without loss of generality, that the individual $1$ is decisive over the pair $(x,t')$ and the individual $2$ is decisive over the pair $(y,t'')$.
Now suppose that $t'\neq t''$, $t'=t_1$, $t''=t_2$ and consider the following classification aggregation problem $\B{c}$:
\begin{center}
\begin{tabular}{c|c c c}
  $\B{c}$  & $1$ & $2$ & $N\setminus\{1,2\}$  \\
    \hline
 $t_1$    & $y$ & $x$ &$x$  \\
 $t_2$    & $x$ & $y$ &$y$  

\end{tabular}
\end{center}
By Minimal Expertise, we have that $y\in\alpha(\B{c})^{-1}(t_2)$ and $\alpha(\B{c})(x)\neq t_1 $.
Thus $\alpha(\B{c})^{-1}(t_1)=\emptyset$, concluding that $\alpha(\B{c})$ is not a classification function.
The proof is similar for the case where $t'= t''$ (we can use the previous classification aggregation problem $\B{c}$).
\end{proof}
\begin{rem}\label{minimallyliberalpossibility}\em There exist minimally expert CAFs for $m\geq p> 2$.
Let individuals $1$ and $2$ be minimally decisive over $(x,t')$ and $(y,t'')$, respectively, with $t'\neq t''$.
A CAF that for the cases where $c_1(x_1)=t'$ or $c_2(x_2)=t''$, classifies those objects accordingly and assigns the remaining objects so no category is left empty, and otherwise, classifies the objects according to a pre-given classification, satisfies Minimal Expertise. 
When $t'=t''$, we need that $m>p>2$ (otherwise there might be categories that are left empty).
\end{rem}
Next, we demonstrate that requiring a CAF to be both independent and minimally expert is quite demanding, though not as stringent as requiring the aggregator to be unanimous.
\begin{propi}\label{indeplib}
Let $p+1\geq m\geq p$. 
There is no CAF that satisfies Minimal Expertise and Independence.    
\end{propi}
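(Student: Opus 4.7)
The plan is to split into cases by whether $m = p$ or $m = p+1$ and, within each, by whether the categories $t', t''$ of Minimal Expertise coincide. In every case, Independence will let me transfer information about objects outside $\{x,y\}$ between carefully designed profiles, and surjectivity will convert this transferred information into a contradiction.

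For $m = p$ every classification is a bijection. If $t' = t''$, I simply take a profile with $c_1(x) = t'$ and $c_2(y) = t'$ (easily realisable as bijections); Minimal Expertise forces $\alpha(\B{c})(x) = \alpha(\B{c})(y) = t'$, contradicting bijectivity. If $t' \neq t''$, the sub-case $p = 2$ is Proposition~\ref{m=p=2}, so assume $p \geq 3$. Pick a bijection $\sigma$ with $\sigma(x) = t'$, $\sigma(y) = t''$, and the bijection $\tau$ that agrees with $\sigma$ outside $\{x,y\}$ but swaps their images. Compare the unanimous profile $\B{c}$ with $c_i = \sigma$ for all $i$ to the profile $\B{c}''$ obtained by replacing only $c_1$ with $\tau$. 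Since the profiles agree pointwise outside $\{x,y\}$, Independence gives $\alpha(\B{c}'')(z) = \alpha(\B{c})(z)$ for every $z \notin \{x,y\}$; combined with the Expertise-forced values $\alpha(\B{c})(x) = t'$ and $\alpha(\B{c})(y) = t''$, the $z$'s collectively fill $P \setminus \{t', t''\}$ in both profiles. In $\B{c}''$, Minimal Expertise of individual~$2$ (whose $y$-value did not change) gives $\alpha(\B{c}'')(y) = t''$, so bijectivity forces $\alpha(\B{c}'')(x) = t'$, contradicting $c''_1(x) = t'' \neq t'$ via Minimal Expertise of individual~$1$.

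For $m = p + 1$, exactly one category is duplicated; write $X = \{x, y, z_3, \ldots, z_{p+1}\}$. If $t' = t''$, I construct two profiles $\B{c}, \B{c}'$ that agree everywhere except at $c_1(x)$ and $c_2(y)$: in $\B{c}$ both of these are $\neq t'$, in $\B{c}'$ both equal $t'$. The common values $c_i(z_k)$, $c_1(y)$, $c_2(x)$, and $c_i(\cdot)$ for $i \geq 3$ can be chosen so every individual classification is surjective under both variants. By Independence, $\alpha(\B{c})(z_k) = \alpha(\B{c}')(z_k)$ for every $k$. In $\B{c}'$, $x$ and $y$ both go to $t'$, so $t'$ is the duplicated category and the $p-1$ objects $z_k$ bijectively cover $P \setminus \{t'\}$; in particular none of them lies in $t'$. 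By Independence the same holds in $\B{c}$, yet there $\alpha(\B{c})(x), \alpha(\B{c})(y) \neq t'$ by Minimal Expertise, so nothing lies in $t'$, violating surjectivity.

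For $m = p+1$ and $t' \neq t''$, I use three profiles $\B{c}^{11}, \B{c}^{12}, \B{c}^{22}$ that agree outside of $c_1(x)$ and $c_2(y)$, with $c_1(x) = t_a$ and $c_2(y) = t_b$ in $\B{c}^{ab}$ (writing $t_1 := t'$, $t_2 := t''$). Again the common values are arranged to keep each $c_i^{ab}$ surjective. By Independence, $\alpha(\B{c}^{ab})(z_k) = \gamma_k$ is independent of $(a,b)$. Expertise and surjectivity then force: from $\B{c}^{11}$, $t'' \in \{\gamma_3,\ldots,\gamma_{p+1}\}$; from $\B{c}^{22}$, $t' \in \{\gamma_3,\ldots,\gamma_{p+1}\}$; and from $\B{c}^{12}$, $P \setminus \{t', t''\} \subseteq \{\gamma_3,\ldots,\gamma_{p+1}\}$. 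Together these give $\{\gamma_3, \ldots, \gamma_{p+1}\} \supseteq P$, contradicting $|\{\gamma_3, \ldots, \gamma_{p+1}\}| \leq p - 1$. The main technical care, in both $m = p+1$ cases, is choosing the shared profile data so that every individual classification is surjective across all the compared variants---a manageable combinatorial check, but the step that makes Independence applicable in the first place.
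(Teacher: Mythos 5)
Your proof is correct and follows essentially the same strategy as the paper's: you build a small family of profiles that differ only in the decisive individuals' classifications of $x$ and $y$, use Independence to pin down the social classification of every other object across those profiles, use Minimal Expertise to force or forbid the placements of $x$ and $y$, and conclude with a violation of surjectivity. The only differences are organisational (you treat $m=p$ via bijectivity and close the $m=p+1$, $t'\neq t''$ case with a counting argument on the shared values $\gamma_k$, where the paper chains through three profiles and exhibits an explicitly empty category), and your explicit attention to keeping each individual classification surjective across the compared profiles is a point the paper leaves implicit.
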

\begin{proof}
Let $\alpha$ be a CAF that satisfies Independence and Minimal Expertise, and assume, without loss of generality, that individual $1$ is minimally decisive over the pair $(x,t')$ and individual $2$ is minimally decisive over the pair $(y,t'')$.
If $m=p$, every classification is such that there is exactly one object in every category, and if $m=p+1$, every classification is such that there is exactly one object in every category but one, that has two objects classified into.
Let $t'\neq t''$, $t'=t_1$, $t'=t_2$, and consider the following classification aggregation problem $\B{c}$:
\begin{center}
\begin{tabular}{c|c c c}
 $\B{c}$   & $1$ & $2$ & $N\setminus\{1,2\}$  \\
    \hline
 $t_1$    & $x$ & $y$ &$x$  \\
 $t_2$    & $y$ & $x$ &$y$  \\
$\vdots$& $\vdots$ & $\vdots$ &$\vdots$ 
\end{tabular}
\end{center}
such that the rest of the table is completed to be a classification aggregation problem.
Thus, by Minimal Expertise, we have that $x\in \alpha(\B{c})^{-1}(t_1)$ and $y\notin\alpha(\B{c})^{-1}(t_2)$.
Assume that $z\in \alpha(\B{c})^{-1}(t_2)$, with $z\neq y$.
Now consider the following classification aggregation problem $\B{c}'$:
\begin{center}
\begin{tabular}{c|c c c}
  $\B{c}'$  & $1$ & $2$ & $N\setminus\{1,2\}$  \\
\hline
 $t_1$    & $x$ & $x$ &$x$  \\
 $t_2$    & $y$ & $y$ &$y$  \\
$\vdots$& $\vdots$ & $\vdots$ &$\vdots$ 
\end{tabular}
\end{center}
such that the rest of the table is completed to be a classification aggregation problem, and the rows from $t_3$ to $t_p$ are the same as in $\B{c}$.
Then, by Independence and Minimal Expertise, we have that $\{z,y\}= \alpha(\B{c}')^{-1}(t_2)$, and thus, $\{x\}=\alpha(\B{c}')^{-1}(t_1)$.
Otherwise, there are two categories with two objects classified into.
For the case where $m=p$, this already leads to a contradiction.
Finally, consider the classification aggregation problem $\B{c}''$:
\begin{center}
\begin{tabular}{c|c c c}
  $\B{c}''$  & $1$ & $2$ & $N\setminus\{1,2\}$  \\
\hline
 $t_1$    & $y$ & $x$ &$x$  \\
 $t_2$    & $x$ & $y$ &$y$  \\
$\vdots$& $\vdots$ & $\vdots$ &$\vdots$ 
\end{tabular}
\end{center}
such that the rest of the table is completed to be a classification aggregation problem, and the rows from $t_3$ to $t_p$ are the same as in $\B{c}'$.
By Independence and Minimal Expertise, we have that $\{z,y\}= \alpha(\B{c}'')^{-1}(t_2)$, $x\notin\alpha(\B{c}'')^{-1}(t_1)$ and $\alpha(\B{c}'')(w)=\alpha(\B{c}')(w)\neq t_1$ for all $w\in X\setminus\{x,y,z\}$.
Hence the category $t_1$ is left empty and $\alpha(\B{c}'')$ is not a classification function.

For $t'=t''$, we can assume that $t'=t_1$.
Thus, we need that $m>p$ and we are in the case $m=p+1$.
Consider the classification aggregation problems $\B{c}$ and $\B{c''}$ previously introduced.
By Minimal expertise, $x,y\in\alpha(\B{c})^{-1}(t_1)$.
Then, it must be that there is exactly one object in each category $t_i\neq t_1$.
Let $z_i\in X\setminus\{x,y\}$ and $\{z_i\}=\alpha(\B{c})^{-1}(t_i)$, for $i\in\{2,\ldots,p\}$.
By Independence, we have that $z_i\in\alpha(\B{c''})^{-1}(t_i)$, for $i\in\{2,\ldots,p\}$, and by Minimally Expertise we have that $x,y\notin \alpha(\B{c''})^{-1}(t_1)$.
So $\alpha(\B{c''})^{-1}(t_1)=\emptyset$, and thus $\alpha(\B{c''})$ is not a classification function.

\end{proof}
In light of the Group Identification Problem, this result might be seen as surprising, as there are independent and expert rules.
An example is the Liberal aggregator, where every individual is decisive over their own classification \citep[see, for more examples,][]{fioravanti2021alternative}.
\begin{rem}\label{liberalindependentpossibility}\em There are independent and expert (thus minimally expert) CAF's for $m\geq p+2$.
For example, assume $1$ is decisive over $x_1$ and $2$ is decisive over $x_2$, and consider the CAF that fixes all the objects in $X\setminus\{x_1,x_2\}$ to $p$ different categories, and then assigns $x_1$ and $x_2$ according to individuals $1$ and $2$ classifications.
Moreover, if $m\geq p+n$, where $n$ is the number of individuals, we can have a rule such that all the individuals are decisive over an object, and under the same reasoning of the previous example, obtain an independent and expert CAF. 
\end{rem}
Now we turn our attention to the analysis of Expertise alone.
We have already shown that the conjunction of either Unanimity or Independence and Minimal Expertise is rather demanding, so it is expected that at least for some configurations of the number of objects and categories, just Expertise alone is prohibitive.
That is what the next result shows.
\begin{propi}\label{lib}
Let $m=p$. There is no CAF that satisfies Expertise.    
\end{propi}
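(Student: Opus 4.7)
The approach I would take is to exploit the fact that when $m=p$, surjectivity of a classification forces it to be a bijection, so each category receives exactly one object. Expertise gives us two individuals $i,j$ and two different objects $x,y$ such that $\alpha(\B{c})(x)=c_i(x)$ and $\alpha(\B{c})(y)=c_j(y)$ for every profile $\B{c}\in\C^N$. Hence if one can produce a profile in which $c_i(x)$ and $c_j(y)$ coincide, the social classification sends two distinct objects to the same category, and thus fails to be surjective onto $P$.

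The plan is to construct such a profile explicitly. First, I would pick any category $t_1\in P$ and choose $c_i$ to be any bijection $X\to P$ with $c_i(x)=t_1$; since $m=p$, this forces $c_i(y)$ to be some category different from $t_1$. Symmetrically, I would choose $c_j$ to be any bijection with $c_j(y)=t_1$ (so $c_j(x)\neq t_1$). The remaining individuals in $N\setminus\{i,j\}$ may be assigned any surjective classifications, e.g.\ any fixed bijection, so that the overall tuple $\B{c}$ lies in $\C^N$. By decisiveness of $i$ over $x$ and of $j$ over $y$, we then have $\alpha(\B{c})(x)=t_1=\alpha(\B{c})(y)$.

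Since $x\neq y$ and $|X|=|P|=m$, the social classification $\alpha(\B{c})$ is a map from an $m$-element set onto a $p=m$-element set that sends two distinct points to the same image, so it cannot be surjective. Therefore $\alpha(\B{c})\notin\C$, contradicting that $\alpha$ is a CAF.

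There is essentially no hard step here: the only thing to verify is that the profile can indeed be built, which is immediate because requiring a single bijection $X\to P$ to send a prescribed object to a prescribed category imposes no obstruction (any completion to a bijection exists when $m=p\geq 2$). The argument is really a pigeonhole observation, and unlike Theorem~\ref{unaminlib} and Proposition~\ref{indeplib}, no auxiliary axiom (Unanimity or Independence) is needed because the tight counting constraint $m=p$ does all the work by itself.
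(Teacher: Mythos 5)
Your proof is correct and takes essentially the same approach as the paper: both construct a profile in which the two decisive individuals send their respective objects $x$ and $y$ to a common category, and then invoke the fact that for $m=p$ every classification is a bijection, so the resulting collision violates surjectivity.
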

\begin{proof}
If $m=p$, then there is exactly one object in each category.
Let $\alpha$ be a CAF that satisfies Expertise, and assume, without loss of generality, that individual $1$ is decisive over object $x$ and individual $2$ is decisive over object $y$.
Consider the following classification aggregation problem $\B{c}$:
\begin{center}
\begin{tabular}{c|c c c}
 $\B{c}$   & $1$ & $2$ & $N\setminus\{1,2\}$  \\
    \hline
 $t_1$    & $y$ & $x$ &$x$  \\
 $t_2$    & $x$ & $y$ &$y$  \\
$\vdots$& $\vdots$ & $\vdots$ &$\vdots$ 
\end{tabular}
\end{center}
such that the rest of the table is completed to be a classification aggregation problem.
Then we have that $x$ and $y$ are both classified into $t_2$, so $\alpha(\B{c})$ is not a classification function.
\end{proof}
\begin{rem}\label{liberalpossibility}\em There are expert CAF's for $m=p+1$.
The CAF that classifies $x$ and $y$ according to the classifications given by the individuals that are decisive over them, and then classifies the rest of the objects following a given order such that no category is left empty, is an expert CAF.
\end{rem}
\begin{rem}\label{dictator}\em There is a CAF that satisfies Unanimity and Independence, but not Minimal Expertise.
It is the dictatorial CAF, such that for all $\B{c}\in\mathcal{C}^N$, it is the case that $\alpha(\mathbf{c})= c_i$, for a fixed $i\in N$.
Moreover, it is the unique CAF that satisfies those two properties \citep{maniquet2016theorem}.
    
\end{rem}
Our final result shows that requiring individuals to have decisive power over the categories turns out to be prohibitive, even if we do not impose extra normative requirements.
\begin{propi}\label{catliberal}
There is no CAF that satisfies Categorical Expertise.    
\end{propi}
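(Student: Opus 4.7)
My plan is to leverage the strong set-containment imposed by categorical decisiveness: if individual $i$ is categorically decisive over $t$, then $\alpha(\B{c})^{-1}(t) \subseteq c_i^{-1}(t)$ for every profile $\B{c}$, since anything socially placed in $t$ must have been placed in $t$ by $i$. Combined with the surjectivity built into the definition of a classification, the strategy is to engineer a single profile whose two such containments (for individuals $i$ over $t$ and $j$ over $t'$) are jointly incompatible with both $\alpha(\B{c})^{-1}(t)$ and $\alpha(\B{c})^{-1}(t')$ being nonempty.

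I would split the argument into two cases. In the main case $t \neq t'$, I would fix a single object $x$ and construct a profile in which $c_i^{-1}(t) = \{x\}$ (individual $i$ places only $x$ in category $t$) and $c_j^{-1}(t') = \{x\}$ as well, distributing the remaining $m-1$ objects across the other $p-1$ categories so that $c_i$ and $c_j$ are both surjective. The containment forces $\alpha(\B{c})^{-1}(t) \subseteq \{x\}$, and surjectivity forces this set to be nonempty, hence $\alpha(\B{c})(x) = t$; symmetrically $\alpha(\B{c})(x) = t'$. Since $t \neq t'$, this contradicts $\alpha(\B{c})$ being single-valued.

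In the remaining case $t = t'$, I would instead pick two distinct objects $x \neq y$ (available since $m \geq 2$) and build a profile with $c_i^{-1}(t) = \{x\}$ and $c_j^{-1}(t) = \{y\}$, again distributing the other objects so the two classifications stay surjective. Double application of categorical decisiveness now yields $\alpha(\B{c})^{-1}(t) \subseteq c_i^{-1}(t) \cap c_j^{-1}(t) = \emptyset$, which violates surjectivity of the social classification.

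The only delicate point is the feasibility of these constructed profiles, that is, that the remaining objects can always be distributed so as to make $c_i$ and $c_j$ surjective onto $P$; this boils down to $m - 1 \geq p - 1$ in the first case and $m - 2 \geq p - 2$ (with a slight care when $p = 2$, handled by reusing $x$ and $y$ in the other categories for the ancillary individuals) in the second, both of which follow from the standing assumption $m \geq p$. I do not anticipate any substantive obstacle beyond this bookkeeping, since the contradictions themselves are immediate from the definitions.
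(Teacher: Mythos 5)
Your proof is correct and follows essentially the same route as the paper: in both cases you construct the same kind of profile (both decisive individuals placing only $x$ in their respective categories when $t\neq t'$, and disjoint singletons $\{x\}$, $\{y\}$ when $t=t'$) and derive the same contradictions from the containment $\alpha(\B{c})^{-1}(t)\subseteq c_i^{-1}(t)$ together with surjectivity. The feasibility bookkeeping you flag is handled in the paper only implicitly (``the rest of the table is completed''), but your check that $m\geq p$ suffices is accurate.
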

\begin{proof}
Let $\alpha$ be a CAF that satisfies Categorical Expertise, and assume, without loss of generality, that the individual $1$ is decisive over the category $t'$ and the individual $2$ is decisive over the category $t''$.
Let $t'\neq t''$, $t'=t_1$, $t'=t_2$, and consider the following classification aggregation problem $\B{c}$:
\begin{center}
\begin{tabular}{c|c c c}
 $\B{c}$   & $1$ & $2$ & $N\setminus\{1,2\}$  \\
    \hline
 $t_1$    & $x$ & $y$ &$x$  \\
 $t_2$    & $y$ & $x$ &$y$  \\

$\vdots$& $\vdots$ & $\vdots$ &$\vdots$ 
\end{tabular}
\end{center}
such that the rest of the table is completed to be a classification aggregation problem.
Then, by Categorical Expertise, we have that $\emptyset\neq\alpha(\B{c})^{-1}(t_1)\subseteq \{x\}$ and $\emptyset\neq\alpha(\B{c})^{-1}(t_2)\subseteq \{x\}$, leading to a contradiction.

For the case $t'=t''$, we can assume that $t'=t_1$ and consider the following classification aggregation problem $\B{c}'$:
\begin{center}
\begin{tabular}{c|c c c}
 $\B{c}'$   & $1$ & $2$ & $N\setminus\{1,2\}$  \\
    \hline
 $t_1$    & $x$ & $y$ &$y$  \\
 $t_2$    & $y$ & $x$ &$x$  \\

$\vdots$& $\vdots$ & $\vdots$ &$\vdots$ 
\end{tabular}
\end{center}
such that the rest of the table is completed to be a classification aggregation problem.
Then, by Categorical Expertise, we have that $\emptyset\neq\alpha(\B{c})^{-1}(t_1)\subseteq \{x\}$ and $\emptyset\neq\alpha(\B{c})^{-1}(t_1)\subseteq \{y\}$, leading to a contradiction.
\end{proof}
If we consider this result in light of the Group Identification Problem, it strikes us as surprising.
When the requirement of not leaving a category empty is not imposed, it is easy to think of a rule that satisfies Categorical Expertise.
For example, for the case of only two categories, where objects are the same individuals \citep[the `Who is a J?' original setting,][]{kasher1997question}, the aggregator that classifies an individual into category J if, and only if, individuals $1$ and $2$ classify them as $J$, and not in $J$ otherwise, is categorically expert.

\section{Final Remarks}\label{remarks}

In this paper, we address the problem of classifying $m$ objects into $p$ different categories, ensuring that no category remains empty and recognizing that certain individuals, deemed experts, have decisive power over objects or categories. 
Even with the minimal requirement that only two individuals possess decisiveness, the potential for aggregators emerges only under specific configurations of the number of objects and categories.
We derive a result analogous to \citeauthor{sen1970paretianliberal}'s \citeyearpar{sen1970paretianliberal} impossibility theorem of a Paretian liberal, demonstrating that there is no Classification Aggregation Function that satisfies both Unanimity and Minimal Expertise, that is when individuals have decisive power over pairs of objects and categories.
This finding illustrates how the Pareto principle remains in conflict with individual expertise within this context.
The best we can do is to show the existence of independent and expert CAFs, for the cases where there are at least two more objects than categories, although for particular values $(m=p)$ even Expertise alone cannot be satisfied.
We also show that it is not possible in this setting to require individuals to be decisive over categories, even if we do not impose additional normative requirements.
A summary of our results can be found in Table~\ref{summary}.

\begin{table}[t]
\begin{center}
\centerline{\begin{tabular}{l|c|c|c}
   & Without additional axioms & Unanimity & Independence  \\ \hline
M. Expertise & No, for $m=p=2$ (Prop.~\ref{m=p=2})  & No (Thm.~\ref{unaminlib}) & No, for $p+1\geq m\geq p$ (Prop.~\ref{indeplib})  \\
 & Yes, for $m\geq p>2$ (Rem.~\ref{minimallyliberalpossibility}) & & Yes, for $m\geq p+2$ (Rem.~\ref{liberalindependentpossibility}) \\ \hline
Expertise & No, for $m=p$ (Prop.~\ref{lib})  & No (Thm.~\ref{unaminlib}) & No, for $p+1\geq m\geq p$ (Prop.~\ref{indeplib})  \\
 & Yes, for $m\geq p+1$ (Rem.~\ref{liberalindependentpossibility} and \ref{liberalpossibility}) & & Yes, for $m\geq p+2$ (Rem.~\ref{liberalindependentpossibility}) \\
 \hline
C. Expertise & No (Prop.~\ref{catliberal})   & No (Prop.~\ref{catliberal}) & No (Prop.~\ref{catliberal})  \\ 
\end{tabular}}
\end{center}
\caption{Summary of our results.}
\label{summary}
\end{table}

In essence, the expertise axiom is a powerful tool for ensuring that decision-making processes are not dominated by a single perspective. 
However, its application necessitates careful consideration of the trade-offs involved, particularly when other principles like fairness, consistency, and independence are also valued. 
The results in this paper highlight the complexities that arise when attempting to create classification systems that are both fair and responsive to a diversity of decision-makers, offering critical insights into the design of such systems across various domains.

\bibliography{ref}
\end{document}